\newcommand{\bb}[1]{\mathbf{#1}}
\newcommand{\bs}[1]{\pmb{#1}}
\newcommand{\low}[1]{_{\scriptscriptstyle #1}}
\newcommand{\high}[1]{^{\scriptscriptstyle #1}}
\newcommand{\mat}[1]{\bb{#1}}
\newcommand{\gmat}[1]{\bs{#1}}
\newcommand{\bvec}[1]{\bb{#1}}
\newcommand{\gvec}[1]{\bs{#1}}
\newcommand{\blkmat}[1]{\begin{bmatrix} #1 \end{bmatrix}}
\newcommand{\ctr}[1]{\bvec{u}\low{#1}}
\newcommand{\om}[1]{\gvec{\omega}\low{#1}}
\newcommand{\dom}[1]{\dot{\gvec{\omega}}\low{#1}}
\newcommand{\lbd}[1]{\gvec{\lambda}\low{#1}}
\newcommand{\dlbd}[1]{\dot{\gvec{\lambda}}\low{#1}}
\newcommand{\ddlbd}[1]{\ddot{\gvec{\lambda}}\low{#1}}
\newcommand{\zero}[0]{\mat{0}}
\newcommand{\zeros}[2]{\zero_{{#1}\times{#2}}}
\newcommand{\eyefunc}[0]{\mat{I}}
\newcommand{\eye}[1]{\eyefunc_{#1}}
\newcommand{\set}[1]{\mathcal{#1}}
\newcommand{\diag}[1]{{\rm diag}(#1)}
\newcommand{\norm}[1]{\left\|#1\right\|}
\newcommand{\normH}[2]{\norm{#2}_{\set{H}_{#1}} }
\newcommand{\tr}[1]{{\rm tr}\left(#1\right)}
\newcommand{\convhull}[1]{ {\rm co}(#1) }
\newcommand{\matskewfunc}[0]{\mat{S}}
\newcommand{\matskew}[1]{\matskewfunc(#1)}
\newcommand{\R}[0]{\mathbb{R}}
\newtheorem{thm}{Theorem}
\newtheorem{prop}[thm]{Proposition}
\newtheorem{define}{Definition}
\begin{document}

\title{LiDAR-based Control of Autonomous Rotorcraft for the Inspection of Pier-like Structures: Proofs}

\author{Bruno~J.~Guerreiro,
        Carlos~Silvestre,
        Rita~Cunha,
        David~Cabecinhas %
    \thanks{This work was partially funded by the Macau Science and Technology Development Fund (FDCT), grant FDCT/\-048/\-2014/\-A1, by project MYRG2015-00127-FST of the Univ. of Macau, by the European Union’s Horizon 2020 research and innovation programme under grant agreement No 731667 (MULTIDRONE), and by the Portuguese Foundation for Science and Technology (FCT), under projects LARSyS UID/\-EEA/\-50009/\-2013 and LOTUS PTDC/\-EEI-AUT/\-5048/\-2014. 
	The work of B. Guerreiro and R. Cunha were respectively supported by the FCT Post-doc Grant SFRH/\-BPD/\-110416/\-2015 and FCT Investigator Programme IF/00921/2013. 
	This publication reflects the authors’ views only. The European Commission and other funding institutions are not responsible for any use that may be made of the information it contains. }
    \thanks{B. Guerreiro and R. Cunha are with the Institute for Systems and Robotics (ISR-Lisbon), LARSyS, Instituto Superior Técnico (IST), Universidade de Lisboa, Av. Rovisco Pais 1, 1049 Lisboa, Portugal. ({\tt bguerreiro@isr.tecnico.ulisboa.pt} and {\tt rita@isr.tecnico.ulisboa.pt}) }
    \thanks{C. Silvestre and D. Cabecinhas are with the Department of Electrical and Computer Engineering, Faculty of Science and Technology, University of Macau, Taipa, Macau, China, and C. Silvestre is also on leave from IST, Portugal. ({\tt csilvestre@umac.mo} and {\tt dcabecinhas@umac.mo})}
}




\maketitle

\begin{abstract}
This is a complementary document to the paper presented in \cite{guerreiro:2017:tcst:lidarctrl}, where more detailed proofs are provided for some results. 
The main paper addresses the problem of trajectory tracking control of autonomous rotorcraft in operation scenarios where only relative position measurements obtained from LiDAR sensors are possible.
The proposed approach defines an alternative kinematic model, directly based on LiDAR measurements, and uses a trajectory-dependent error space to express the dynamic model of the vehicle.
An LPV representation with piecewise affine dependence on the parameters is adopted to describe the error dynamics over a set of predefined operating regions, and a continuous-time $\mathcal{H}_2$ control problem is solved using LMIs and implemented within the scope of gain-scheduling control theory.
In this document, Section \ref{lpv:app:il-stability} presents the stability analysis of the attitude inner-loop presented in \cite[Section II.B]{guerreiro:2017:tcst:lidarctrl}, whereas Section \ref{lpv:app:ctrl-synthesis} presents a more detailed version of the stability and performance guarantees for the LPV system, extending the results presented in \cite[Section IV.A]{guerreiro:2017:tcst:lidarctrl}.
\end{abstract}

\appendices

\section{Inner-loop Dynamics}
\label{lpv:app:il-stability}

This appendix presents a possible inner-loop stabilization method, within the framework of feedback linearization and Lyapunov stability methods, that results in a second-order linear model for the pitch and roll angular motion, as well as a first-order linear model for the yaw angular velocity.

\begin{thm}[Inner-loop Stability]
    \label{lpv:thm:innerloop}
    Consider the control law given by
    \begin{align}
        \bvec{n}\low{ext} = &\matskew{\om{}}\,\mat{J}\low{B}\,\om{} 
                    + \mat{J}\low{B}\,\mat{Q}\high{-1}(\lbd{}) [ 
                        \dot{\mat{Q}}(\lbd{})\,\om{} 
                        \!- \mat{K}_2\,( \dlbd{} - \bvec{e}_3\,\bvec{e}_3^T\,\ctr{IL} )\nonumber\\
                   &    \!- \gmat{\Pi}_{\bvec{e}_3}^T\,\mat{K}_1\,\gmat{\Pi}_{\bvec{e}_3}\,( \lbd{} - \ctr{IL} )
                    ]
            \label{model:eq:control-law}
    \end{align}
    where $\mat{K}_1 \in \R^2$ and $\mat{K}_2 \in \R^3$ are positive definite diagonal matrices and the inner-loop input vector is denoted as $\ctr{IL} = \blkmat{u_{\phi} & u_{\theta} & u_{\dot{\psi}}}^T$, accounting for the desired roll angle, pitch angle, and yaw angular rate, respectively. 
    Then, the resulting attitude dynamics is given by 
    \begin{align}
        \ddlbd{}   &= - \mat{K}_2\,( \dlbd{} - \bvec{e}_3 \bvec{e}_3^T \ctr{IL} )
                        - \gmat{\Pi}_{\bvec{e}_3}^T\,\mat{K}_1\,\gmat{\Pi}_{\bvec{e}_3}\,( \lbd{} - \ctr{IL} ) \;.
        \label{model:eq:dyn-il}
    \end{align}
    which, for constant inputs, guarantees that the equilibrium point $(\gmat{\Pi}_{\bvec{e}_3}\lbd{},\dlbd{}) = (\gmat{\Pi}_{\bvec{e}_3}\ctr{IL},\bvec{e}_3 \bvec{e}_3^T \ctr{IL})$ is exponentially stable.
\end{thm}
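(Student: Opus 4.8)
The plan is to separate the statement into two claims: that the control law~\eqref{model:eq:control-law} turns the rigid-body attitude model into the closed loop~\eqref{model:eq:dyn-il}, and that~\eqref{model:eq:dyn-il} is exponentially stable about the stated equilibrium when $\ctr{IL}$ is constant. The first claim I would dispatch by direct substitution: plugging~\eqref{model:eq:control-law} into the rigid-body attitude model, the term $\matskew{\om{}}\,\mat{J}\low{B}\,\om{}$ cancels the gyroscopic torque of Euler's equation, the leading factor $\mat{J}\low{B}$ cancels the inertia, and the factor $\mat{Q}\high{-1}(\lbd{})$ together with the $\dot{\mat{Q}}(\lbd{})\,\om{}$ term inverts the kinematic map between body rates and Euler-angle rates, so that after feedback linearization only the prescribed bracketed second-order expression survives; this is a routine algebraic verification given the model, so I will concentrate on the stability claim.

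For the stability claim, with $\ctr{IL}$ constant (hence $\dot{\ctr{IL}}=\bvec{0}$), I would pass to error coordinates adapted to the stated equilibrium. Put $\bvec{z}_1 := \gmat{\Pi}_{\bvec{e}_3}(\lbd{}-\ctr{IL}) \in \R^2$ and $\bvec{z}_2 := \dlbd{} - \bvec{e}_3\bvec{e}_3^T\ctr{IL} \in \R^3$, so that the stated equilibrium is $(\bvec{z}_1,\bvec{z}_2)=(\bvec{0},\bvec{0})$. Differentiating and using~\eqref{model:eq:dyn-il} together with $\gmat{\Pi}_{\bvec{e}_3}\bvec{e}_3=\bvec{0}$, one obtains the closed, linear, time-invariant system $\dot{\bvec{z}}_1 = \gmat{\Pi}_{\bvec{e}_3}\bvec{z}_2$, $\dot{\bvec{z}}_2 = -\gmat{\Pi}_{\bvec{e}_3}^T\mat{K}_1\bvec{z}_1 - \mat{K}_2\bvec{z}_2$; note that the yaw angle $\bvec{e}_3^T\lbd{}$ has dropped out, which is precisely why the statement asserts stability of $\gmat{\Pi}_{\bvec{e}_3}\lbd{}$ rather than of $\lbd{}$ (the yaw angle integrates the generally nonzero commanded rate and does not converge).

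It then remains to show this linear system is Hurwitz, which I would do by exploiting the diagonal structure of $\mat{K}_1$ and $\mat{K}_2$. Projecting the $\bvec{z}_2$-dynamics onto $\bvec{e}_3$ and onto $\gmat{\Pi}_{\bvec{e}_3}^T$ decouples it into a scalar yaw-rate equation $\dot s = -(\bvec{e}_3^T\mat{K}_2\bvec{e}_3)\,s$, $s:=\bvec{e}_3^T\bvec{z}_2$, and two independent second-order equations $\ddot z_{1,i} + k_{2,i}\dot z_{1,i} + k_{1,i}z_{1,i}=0$ for the roll and pitch error components, with $k_{1,i},k_{2,i}>0$ the corresponding diagonal entries; every factor of the characteristic polynomial then has degree $\le 2$ with strictly positive coefficients, hence all roots lie in the open left half-plane and the system is exponentially stable. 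Staying closer to the Lyapunov spirit of this appendix, one can instead take $V(\bvec{z}) = \tfrac12\bvec{z}_2^T\bvec{z}_2 + \tfrac12\bvec{z}_1^T\mat{K}_1\bvec{z}_1$, which is positive definite and radially unbounded; the cross terms in $\dot V$ cancel, leaving $\dot V = -\bvec{z}_2^T\mat{K}_2\bvec{z}_2 \le -\eigmin{\mat{K}_2}\norm{\bvec{z}_2}^2$, and since the largest invariant set inside $\{\bvec{z}_2=\bvec{0}\}$ collapses to the origin (on it $\gmat{\Pi}_{\bvec{e}_3}^T\mat{K}_1\bvec{z}_1=\bvec{0}$, whence $\mat{K}_1\bvec{z}_1=\bvec{0}$ after premultiplying by $\gmat{\Pi}_{\bvec{e}_3}$), LaSalle's invariance principle gives global asymptotic stability, which for a linear system is global exponential stability; adding a small cross-term $\epsilon\,\bvec{z}_1^T\gmat{\Pi}_{\bvec{e}_3}\bvec{z}_2$ to $V$ makes the exponential estimate explicit.

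The only genuinely delicate point — and the one I would be most careful to justify — is the choice of partial state: one must verify that $(\gmat{\Pi}_{\bvec{e}_3}\lbd{},\dlbd{})$ really does evolve as an autonomous (closed) subsystem, i.e.\ that the free yaw angle never re-enters the reduced dynamics, and that the split into the yaw-rate mode and the roll/pitch modes is exact rather than merely approximate. Once these structural facts are pinned down, the remaining arguments (Routh–Hurwitz on degree-two factors, or the $\dot V$ computation plus LaSalle) are entirely standard.
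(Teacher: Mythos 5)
Your proposal is correct and follows essentially the same route as the paper: substitute the feedback-linearizing law into the rigid-body attitude dynamics to obtain the linear closed loop \eqref{model:eq:dyn-il}, observe that $(\gmat{\Pi}_{\bvec{e}_3}\lbd{},\dlbd{})$ evolves as an autonomous LTI system in which the yaw angle does not appear, and exploit the diagonal gains to decouple it into two second-order roll/pitch channels and a first-order yaw-rate channel, each trivially Hurwitz (the Lyapunov/LaSalle variant you sketch is a correct bonus the paper does not pursue). The only nuance you gloss over, which the paper flags explicitly, is that the result is only local because the feedback linearization requires $\mat{Q}(\lbd{})$ to be invertible, so ``radially unbounded'' and ``global'' should be read as restricted to the domain of validity of the Euler-angle parameterization.
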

\begin{proof}
Recalling the rigid body dynamics, the angular motion equations can be written as
\begin{subequations}
\label{model:eq:angular-motion}
\begin{empheq}[left=\empheqlbrace]{align}
    \dom{}{} &= \mat{J}\low{B}^{-1} [ -\matskew{\om{}}\,\mat{J}\low{B} \om{} + \bvec{n}\low{ext} ]
        \label{model:eq:angular-dyn}        \\
    \dlbd{}   &= \mat{Q}(\lbd{})\,\om{} 
        \label{model:eq:angular-kin}
\end{empheq}
\end{subequations}
Noting that the angular velocity can be defined as $\om{} = \mat{Q}^{-1}(\lbd{}) \dlbd{}$ and taking the time derivative of angular kinematics, the angular motion dynamics can be expressed as
\begin{align}
    \ddlbd{}   
               &= \dot{\mat{Q}}(\lbd{})\,\om{} 
                    + \mat{Q}(\lbd{})\,\mat{J}\low{B}^{-1} 
                        \left[  - \matskew{\om{}}\,\mat{J}\low{B}\,\om{} 
                                + \bvec{n}\low{ext} \right] 
    \label{model:eq:il-angular-dyn}
\end{align}
Consider the desired roll angle, pitch angle, and yaw rate to be denoted as $u_{\phi} \in \R$, $u_{\theta} \in (-\pi/2,\pi/2)$, and $u_{\dot{\psi}} \in \R$, respectively, indicating that they will be considered as inputs in the overall system. 
Let also the input vector of this inner-loop system be defined as $\ctr{IL} = \blkmat{u_{\phi} & u_{\theta} & u_{\dot{\psi}}}^T$. 

Thus, a feedback linearizing control law can be designed as
\begin{align}
    \bvec{n}\low{ext} = &\matskew{\om{}}\,\mat{J}\low{B}\,\om{} 
                + \mat{J}\low{B}\,\mat{Q}\high{-1}(\lbd{}) [ 
                    \dot{\mat{Q}}(\lbd{})\,\om{} 
                    \!- \mat{K}_2\,( \dlbd{} - \bvec{e}_3\,\bvec{e}_3^T\,\ctr{IL} )\nonumber\\
               &    \!- \gmat{\Pi}_{\bvec{e}_3}^T\,\mat{K}_1\,\gmat{\Pi}_{\bvec{e}_3}\,( \lbd{} - \ctr{IL} )
                ]
        \label{model:eq:il-control-law}
\end{align}
where $\mat{K}_1 \in \R^{2\times2}$, $\mat{K}_2 \in \R^{3\times3}$, $\bvec{e}_3 = \blkmat{0 & 0 & 1}^T$, and as such $\bvec{e}_3 \bvec{e}_3^T \ctr{IL} = \blkmat{0 & 0 & u_{\dot{\psi}}}^T$, while $\gmat{\Pi}_{\bvec{e}_3} = \blkmat{\eye{2} & \zeros{2}{1}}$ such that $\gmat{\Pi}_{\bvec{e}_3} \ctr{IL} = \blkmat{ u_{\phi} & u_{\theta}}^T$. 
Replacing the control law \eqref{model:eq:il-control-law} in the angular dynamics \eqref{model:eq:il-angular-dyn}, it can be seen that the closed-loop dynamics results in 
\begin{align}
    \ddlbd{}   &= - \mat{K}_2\,( \dlbd{} - \bvec{e}_3\,\bvec{e}_3^T\,\ctr{IL} )
                    - \gmat{\Pi}_{\bvec{e}_3}^T\,\mat{K}_1\,\gmat{\Pi}_{\bvec{e}_3}\,( \lbd{} - \ctr{IL} ) \;.
    \label{model:eq:il-angular-closed-loop}
\end{align}
Considering that $\ctr{IL} \in \mathcal{U}\low{IL} \subset \R^3$ and define a state vector as $\bvec{x}\low{IL} = \blkmat{ \gmat{\Pi}_{\bvec{e}_3}\lbd{} & \dlbd{} }^T \in \mathcal{X}\low{IL}$, the resulting system is a linear time-invariant (LTI) system of the form 
\[
    \dot{\bvec{x}}\low{IL} = \mat{A}\low{IL}\,\bvec{x}\low{IL} + \mat{B}\low{IL}\,\ctr{IL}\;,
\]
where the system matrices are given by 
\begin{align*}
    \mat{A}\low{IL}   &= \blkmat{ \zeros{2}{2}                      & \gmat{\Pi}_{\bvec{e}_3} \\
                                  -\gmat{\Pi}_{\bvec{e}_3}^T\,\mat{K}_1    & -\mat{K}_2        } \;, \\
    \mat{B}\low{IL}   &= \blkmat{ \zeros{2}{3}                                           \\
                                  \gmat{\Pi}_{\bvec{e}_3}^T\,\mat{K}_1\,\gmat{\Pi}_{\bvec{e}_3}
                                    +\mat{K}_2\,\bvec{e}_3\,\bvec{e}_3^T        }\;.
\end{align*}
It can be seen that matrix $\mat{A}\low{IL}$ is Hurwitz for any positive definite matrices $\mat{K}_1$ and $\mat{K}_2$, and, therefore, the system is locally input-to-state stable, noting that the local part of the result is a direct consequence of the domain of the Euler angles not being $\R^3$.

Considering the matrices $\mat{K}_1 = \diag{k_\phi,k_\theta}$ and $\mat{K}_2 = \diag{k_{\dot{\phi}},k_{\dot{\theta}},k_{\dot{\psi}}}$ the system described in \eqref{model:eq:il-angular-closed-loop} can also be written as 
\begin{subequations}
\label{model:eq:il-simple}
\begin{empheq}[left=\empheqlbrace]{align}
    \ddot{\phi}     &= -k_{\dot{\phi}}\,\dot{\phi} - k_\phi\,(\phi - u_\phi )
                    \\
    \ddot{\theta}   &= -k_{\dot{\theta}}\,\dot{\theta} - k_\theta\,(\theta - u_\theta )\;.
                    \\
    \ddot{\psi}     &= -k_{\dot{\psi}}\,(\dot{\psi} - u_{\dot{\psi}} )
\end{empheq} 
\end{subequations}
It can be seen that, with constant inputs and the change of variables $\tilde{\phi} = \phi - u_\phi$, $\tilde{\theta} = \theta - u_\theta$, and $\dot{\tilde{\psi}} = \dot{\psi} - u_{\dot{\psi}}$, the resulting autonomous system is exponentially stable.
Thus, it can be concluded that the state variables $\phi$, $\theta$, and $\dot{\psi}$ converge exponentially to the constant inputs $u_\phi$, $u_\theta$, and $u_{\dot{\psi}}$, respectively.

\end{proof}

\section{Controller Synthesis}
\label{lpv:app:ctrl-synthesis}

In this section an LMI approach is used to tackle the continuous-time state feedback $\mathcal{H}_2$ synthesis problem for polytopic LPV systems. Consider a general LPV system of the form
\begin{subequations}
\label{lpv:eq:ctr-sys}
\begin{empheq}[left=\empheqlbrace]{align}
        \dot{\bvec{x}}    &=   \mat{A}(\gvec{\xi})\bvec{x} + 
                                \mat{B}_w(\gvec{\xi})\bvec{w} + 
                                \mat{B}(\gvec{\xi})\bvec{u} \\
        \bvec{z}          &=   \mat{C}(\gvec{\xi})\bvec{x} + 
                                \mat{D}(\gvec{\xi})\bvec{w} + 
                                \mat{E}(\gvec{\xi})\bvec{u} 
\end{empheq}
\end{subequations}
where $\bvec{x}$ is the state, $\bvec{u}$ is the control input, $\bvec{z}$
denotes the error signal to be controlled, and $\bvec{w}$ denotes the
exogenous input signal. 
The system is parameterized by $\gvec{\xi}$, which is a possibly time-varying 
parameter vector and belongs to the convex set $\set{E}^j = \textrm{co}(\set{E}^j_0)$.
Here, the operator $\textrm{co}(.)$ denotes the convex hull of the elements of the 
argument set, $\set{E}^j_0 = \{\gvec{\xi}_1,\ldots,\gvec{\xi}_{n_j}\}$, where $\gvec{\xi}_1$ 
to $\gvec{\xi}_{n_j}$ are the vertices of a polytope. 
It is also noted that the controller synthesis presented in the following subsection will only be valid 
for a specific operating region, here represented by $\set{E}^j \subset \set{E}$. 

Applying the static state feedback law given by $\bvec{u} = \mat{K}\,\bvec{x}$ 
to \eqref{lpv:eq:ctr-sys} results in the closed-loop system given by
\begin{subequations}
\label{lpv:eq:ctr-sys-cl}
\begin{empheq}[left={\mat{T}_{zw}(\gvec{\xi}):=\empheqlbrace}]{align}
        \dot{\bvec{x}}    &= \mat{A}_c(\gvec{\xi})\,\bvec{x} + \mat{B}_c(\gvec{\xi})\,\bvec{w} \\
        \bvec{z}          &= \mat{C}_c(\gvec{\xi})\,\bvec{x} + \mat{D}_c(\gvec{\xi})\,\bvec{w} 
\end{empheq}
\end{subequations}
where $\mat{T}_{zw}(\gvec{\xi})$ denotes the resulting closed-loop operator from 
the disturbance input $\bvec{w}$ to the performance output $\bvec{z}$, and
the system matrices are defined as
$\mat{A}_c(\gvec{\xi}) = \mat{A}(\gvec{\xi})+\mat{B}(\gvec{\xi})\,\mat{K}$, 
$\mat{B}_c(\gvec{\xi}) = \mat{B}_w(\gvec{\xi})$, 
$\mat{C}_c(\gvec{\xi}) = \mat{C}(\gvec{\xi}) + \mat{E}(\gvec{\xi})\,\mat{K}$ 
and $\mat{D}_c(\gvec{\xi}) = \mat{D}(\gvec{\xi})$. 
The closed-loop system can be characterized in terms of quadratic stability 
using the following definition, where $\mat{X} \succ \zero$ denotes that matrix $\mat{X}$ is positive definite. 
\begin{define}[Quadratic stability, \cite{boyd:1994}]
    The system  is said to be quadratically stable if there exists a matrix 
    $\mat{X} \succ \zero$ such that 
    $\mat{A}_c^T(\gvec{\xi})\,\mat{X} + \mat{X}\,\mat{A}_c(\gvec{\xi}) \prec \zero$
    is satisfied for all $\gvec{\xi} \in \set{E}^j$.
\end{define}\noindent
It can be seen that testing for stability or solving the synthesis problem without any further result, involves an infinite number of LMIs. 
Thus, several different structures for LPV systems have been proposed which reduce the problem to that of solving a finite number of LMIs. 

In this section, an affine polytopic description is adopted, which can also be used to model a wide spectrum of systems and, as shown in the results presented in \cite{guerreiro:2016:tcst:lidarctrl}, is an adequate choice for the system at hand.
\begin{define}[Affine polytopic LPV system] \label{def:LPV}
    The system \eqref{lpv:eq:ctr-sys} is said to be a polytopic LPV system if
    the system matrix
    \begin{equation}
        \mat{P}(\gvec{\xi}) = \begin{bmatrix}
            \mat{A}(\gvec{\xi}) & \mat{B}_w(\gvec{\xi}) & \mat{B}(\gvec{\xi}) \\ 
            \mat{C}(\gvec{\xi}) & \mat{D}(\gvec{\xi}) & \mat{E}(\gvec{\xi}) \\
        \end{bmatrix}
    \end{equation}
    verifies $\mat{P}(\gvec{\xi}) \in \textrm{co}\left(\mat{P}_1,\ldots,\mat{P}_{n_r}\right)$
    for all $\gvec{\xi} \in \set{E}^j$, where
    \begin{equation*}
        \mat{P}_i = \begin{bmatrix}
            \mat{A}_i & \mat{B}_{w_i} & \mat{B}_i \\ 
            \mat{C}_i & \mat{D}_i & \mat{E}_i \\
        \end{bmatrix}
    \end{equation*}
    for all $i = 1,\ldots,n_j$. 
    Moreover, if $\set{E}^j$ is a polytopic set, such as 
    $\set{E}^j  = \convhull{\set{E}^j_0}$, $\set{E}^j_0 
                = \{\gvec{\xi}_1,\ldots,\gvec{\xi}_{n_j}\}$, and $\mat{P}(\gvec{\xi})$
    depends affinely on $\gvec{\xi}$, then $\mat{P}_i = \mat{P}(\gvec{\xi}_i)$ 
    for all $i = 1,\ldots,n_j$, i.e., the vertices of the parameter set can be 
    uniquely identified with the vertices of the system.
\end{define}\noindent
This polytopic structure used with the following lemma, enables the use of a
powerful set of results. 
\begin{prop}[{\cite[Proposition 1.19]{scherer:2000}}]
    \label{lpv:prop:polytopic-func}
    Let $\bvec{f}: \set{E}^j \rightarrow \R$ be a convex function defined on the convex
    set $\set{E}^j = \convhull{\set{E}^j_0}$. Then, for some $\gamma \in \R$, 
    $\bvec{f}(\gvec{\xi}) \leq \gamma$ for all $\gvec{\xi} \in \set{E}^j$ if and only if 
    $\bvec{f}(\gvec{\xi}) \leq \gamma$ for all $\gvec{\xi} \in \set{E}^j_0$.
\end{prop}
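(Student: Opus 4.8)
The plan is to prove the two implications separately, the forward one by inclusion and the reverse one by Jensen's inequality. The $(\Rightarrow)$ direction is immediate: since $\set{E}^j_0 \subseteq \convhull{\set{E}^j_0} = \set{E}^j$, any bound $\bvec{f}(\gvec{\xi}) \leq \gamma$ holding on all of $\set{E}^j$ holds in particular on the subset $\set{E}^j_0$. No work is required here beyond noting this containment.

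For the $(\Leftarrow)$ direction, I would start from the hypothesis $\bvec{f}(\gvec{\xi}_i) \leq \gamma$ for every vertex $\gvec{\xi}_i \in \set{E}^j_0 = \{\gvec{\xi}_1,\ldots,\gvec{\xi}_{n_j}\}$ and take an arbitrary point $\gvec{\xi} \in \set{E}^j$. Since $\set{E}^j_0$ is a finite set and $\set{E}^j$ is its convex hull, there exist coefficients $\mu_1,\ldots,\mu_{n_j} \geq 0$ with $\sum_{i=1}^{n_j}\mu_i = 1$ such that $\gvec{\xi} = \sum_{i=1}^{n_j}\mu_i\,\gvec{\xi}_i$; note that finiteness of $\set{E}^j_0$ is what lets us write this directly, with no appeal to Carathéodory's theorem.

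The key step is then to invoke convexity of $\bvec{f}$ in its multi-point (Jensen) form, which follows from the defining two-point inequality by an easy induction on $n_j$: I would state that $\bvec{f}\bigl(\sum_i \mu_i\,\gvec{\xi}_i\bigr) \leq \sum_i \mu_i\,\bvec{f}(\gvec{\xi}_i)$. Combining this with the vertex bounds gives $\bvec{f}(\gvec{\xi}) \leq \sum_i \mu_i\,\bvec{f}(\gvec{\xi}_i) \leq \sum_i \mu_i\,\gamma = \gamma$, and since $\gvec{\xi} \in \set{E}^j$ was arbitrary the bound holds on all of $\set{E}^j$, completing the proof.

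There is no real obstacle here: the statement is a standard fact that a convex function on a polytope is bounded above precisely by its values at the vertices, and the argument is elementary. The only point worth stating carefully is the passage from the two-point definition of convexity to the finite convex-combination inequality, which I would either mention as folklore or dispatch with one line of induction.
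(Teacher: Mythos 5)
Your proof is correct. Note that the paper does not actually prove this proposition --- it is quoted verbatim as Proposition 1.19 of the cited reference \cite{scherer:2000} and used as a black box --- so there is no internal argument to compare against. Your two-step argument (the trivial inclusion $\set{E}^j_0 \subseteq \set{E}^j$ for the forward direction, and the finite Jensen inequality applied to a convex-combination representation of an arbitrary point of $\convhull{\set{E}^j_0}$ for the converse) is exactly the standard proof of this fact, and your remark that finiteness of the vertex set makes Carath\'eodory's theorem unnecessary is accurate.
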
\noindent
Thus, the quadratic stability of an affine polytopic LPV system can be easily 
established if there exists a matrix $\mat{X} \succ \zero$ such that 
$\mat{A}_c^T(\gvec{\xi})\,\mat{X} + \mat{X}\,\mat{A}_c(\gvec{\xi}) \prec \zero$
is satisfied for all $\gvec{\xi} \in \set{E}^j_0$. 

The $\mathcal{H}_2$ synthesis problem can be described as that of
finding a control matrix $\mat{K}$ that stabilizes the closed-loop system
and minimizes the $\mathcal{H}_2$-norm of $\mat{T}_{zw}(\gvec{\xi})$, 
denoted by $\normH{2}{\mat{T}_{zw}(\gvec{\xi})}$. 
It is assumed that matrix $\mat{D}(\gvec{\xi}) = 0$ in order to guarantee that $\normH{2}{\mat{T}_{zw}(\gvec{\xi})}$ is finite for every internally stabilizing and strictly proper controller. 
The following theorem is used for controller design and relies on results 
available in \cite{ghaoui:1999} and \cite{scherer:2000}, after being rewritten 
for the case of polytopic LPV systems. 
In the following, $\tr{.}$ denotes the trace of the argument matrix.
\begin{thm}[Polytopic stability]
    \label{lpv:thm:LMIsynthesis}
    If there are real matrices $\mat{X} = \mat{X}^T \succ 0$, $\mat{Y} \succ 0$, 
    and $\mat{W}$ such that
    \begin{subequations}
    \label{lpv:eq:ctr-LMIsyn-full}
    \begin{align}
        { \blkmat{
            \mat{A}(\gvec{\xi}) \mat{X} 
            \!\!+\!\! \mat{X} \mat{A}^T\!(\gvec{\xi}) 
            \!+\! \mat{B}(\gvec{\xi}) \mat{W} 
            \!\!+\!\! \mat{W}^T \mat{B}^T\!(\gvec{\xi}) 
                & \!\!\!\mat{B}_w(\gvec{\xi})    \\
            \mat{B}_w^T(\gvec{\xi})  
                & \!\!\!-\eyefunc                \\
        } \!\!\prec  \!0 }
        \label{lpv:eq:ctr-LMIsyn-1}
        \\
        { \blkmat{
            \mat{Y}         & \mat{C}(\gvec{\xi})\,\mat{X} + \mat{E}(\gvec{\xi})\,\mat{W}   \\
            \mat{X}\,\mat{C}^T(\gvec{\xi}) + \mat{W}^T\,\mat{E}^T(\gvec{\xi})   & \mat{X}   \\
        } \!\!\succ \!0 }
        \label{lpv:eq:ctr-LMIsyn-2} 
        \\
        { \tr{\mat{Y}} \!\!<\! \gamma^2 }
        \label{lpv:eq:ctr-LMIsyn-3}
    \end{align}
    \end{subequations}
    for all $\gvec{\xi} \in \set{E}^j_0$, where the static feedback controller is 
    defined as $\mat{K} = \mat{W}\,\mat{X}^{-1}$, then, the closed-loop system 
    is quadratically stable and there exists an upper-bound $\gamma$ for the 
    continuous-time $\mathcal{H}_2$-norm of the 
    closed-loop operator $\mat{T}_{zw}(\gvec{\xi})$ for all $\gvec{\xi} \in \set{E}^j$, 
    i.e.,
    \begin{equation}\label{lpv:eq:ctr-LMIsyn-4}
        \normH{2}{\mat{T}_{zw}(\gvec{\xi})} < \gamma\;, \forall\gvec{\xi} \in \set{E}^j\;.
    \end{equation}
\end{thm}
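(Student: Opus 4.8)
The plan is to recognize \eqref{lpv:eq:ctr-LMIsyn-full} as the polytopic lifting of the classical LMI characterization of the $\mathcal{H}_2$ norm of a linear time-invariant system (see \cite{ghaoui:1999,scherer:2000}), applied to the closed-loop system \eqref{lpv:eq:ctr-sys-cl} frozen at each admissible $\gvec{\xi}$, and then to exploit the affine polytopic structure of Definition \ref{def:LPV} together with Proposition \ref{lpv:prop:polytopic-func} to collapse the resulting infinite family of inequalities onto the finite vertex set $\set{E}^j_0$.

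First I would recall the standard fact: for a Hurwitz, strictly proper realization $(\mat{A}_c,\mat{B}_c,\mat{C}_c,\zero)$ one has $\normH{2}{\mat{T}_{zw}}^2 = \tr{\mat{C}_c\,\mat{G}\,\mat{C}_c^T}$, where $\mat{G} \succeq 0$ solves $\mat{A}_c\mat{G} + \mat{G}\mat{A}_c^T + \mat{B}_c\mat{B}_c^T = \zero$; moreover $\normH{2}{\mat{T}_{zw}} < \gamma$ whenever there exist $\mat{X} \succ 0$ and $\mat{Y} \succ 0$ with $\mat{A}_c\mat{X} + \mat{X}\mat{A}_c^T + \mat{B}_c\mat{B}_c^T \prec \zero$, $\mat{Y} \succ \mat{C}_c\mat{X}\mat{C}_c^T$, and $\tr{\mat{Y}} < \gamma^2$, since the first inequality forces $\mat{X} \succ \mat{G}$ and hence $\tr{\mat{C}_c\mat{G}\mat{C}_c^T} \le \tr{\mat{C}_c\mat{X}\mat{C}_c^T} < \tr{\mat{Y}} < \gamma^2$. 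Substituting the closed-loop data $\mat{A}_c = \mat{A}(\gvec{\xi}) + \mat{B}(\gvec{\xi})\mat{K}$, $\mat{C}_c = \mat{C}(\gvec{\xi}) + \mat{E}(\gvec{\xi})\mat{K}$, $\mat{B}_c = \mat{B}_w(\gvec{\xi})$ and performing the linearizing change of variable $\mat{W} = \mat{K}\mat{X}$ removes the bilinear products; a Schur complement on the negative definite block $-\eyefunc$ then turns $\mat{A}_c\mat{X} + \mat{X}\mat{A}_c^T + \mat{B}_w\mat{B}_w^T \prec \zero$ into \eqref{lpv:eq:ctr-LMIsyn-1}, a Schur complement on $\mat{X} \succ 0$ turns $\mat{Y} \succ \mat{C}_c\mat{X}\mat{C}_c^T$ into \eqref{lpv:eq:ctr-LMIsyn-2}, and \eqref{lpv:eq:ctr-LMIsyn-3} is unchanged.

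Next, for fixed decision variables $(\mat{X},\mat{Y},\mat{W})$ the matrices on the left of \eqref{lpv:eq:ctr-LMIsyn-1}--\eqref{lpv:eq:ctr-LMIsyn-2} are affine in the augmented system matrix $\mat{P}(\gvec{\xi})$, which by Definition \ref{def:LPV} satisfies $\mat{P}(\gvec{\xi}) \in \convhull{\mat{P}_1,\ldots,\mat{P}_{n_j}}$ with $\mat{P}_i = \mat{P}(\gvec{\xi}_i)$; hence each LMI residual at a given $\gvec{\xi}$ is a convex combination of its vertex values. Since a convex combination of negative (resp. positive) definite matrices is again negative (resp. positive) definite — which is also what Proposition \ref{lpv:prop:polytopic-func} yields when applied to the maximum-eigenvalue function of the appropriately signed residuals — feasibility at every $\gvec{\xi} \in \set{E}^j_0$ implies feasibility at every $\gvec{\xi} \in \set{E}^j$, with the same $(\mat{X},\mat{Y},\mat{W})$. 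Reversing the Schur complements and setting $\mat{K} = \mat{W}\mat{X}^{-1}$ then gives, for all $\gvec{\xi} \in \set{E}^j$: from \eqref{lpv:eq:ctr-LMIsyn-1}, after left and right multiplication by $\mat{X}^{-1}$, the inequality $\mat{A}_c^T(\gvec{\xi})\mat{X}^{-1} + \mat{X}^{-1}\mat{A}_c(\gvec{\xi}) \prec \zero$, i.e., quadratic stability with Lyapunov matrix $\mat{X}^{-1}$; and, by the LTI characterization above applied at each frozen $\gvec{\xi}$, the bound $\normH{2}{\mat{T}_{zw}(\gvec{\xi})} < \gamma$, which is \eqref{lpv:eq:ctr-LMIsyn-4}.

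The Schur-complement rewritings and the change of variable are routine; the substantive point is the last step — that the affine dependence of $\mat{P}(\gvec{\xi})$ renders the entire family of matrix inequalities convex in $\gvec{\xi}$, so that one and the same $\mat{X}$ (serving simultaneously as the common Lyapunov matrix and as a uniform upper bound on the Gramians $\mat{G}$) works over all of $\set{E}^j$ once it works at the vertices. The one place that needs a little care is the inequality $\mat{X} \succ \mat{G}$ invoked in the norm bound: subtracting the Lyapunov equation from the strict inequality yields $\mat{A}_c(\mat{X}-\mat{G}) + (\mat{X}-\mat{G})\mat{A}_c^T \prec \zero$, and since $\mat{A}_c(\gvec{\xi})$ has already been shown to be Hurwitz, the associated Lyapunov operator is invertible and maps the cone of negative definite matrices onto that of positive definite matrices, forcing $\mat{X} - \mat{G} \succ 0$.
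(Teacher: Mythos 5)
Your proposal is correct and follows essentially the same route as the paper: the Gramian-based characterization of the $\mathcal{H}_2$ norm, the linearizing substitution $\mat{W} = \mat{K}\mat{X}$, Schur complements to obtain \eqref{lpv:eq:ctr-LMIsyn-1}--\eqref{lpv:eq:ctr-LMIsyn-2}, the vertex-to-polytope convexity argument via Proposition \ref{lpv:prop:polytopic-func}, and quadratic stability from the strict Lyapunov inequality. If anything, you are slightly more careful than the paper in justifying $\mat{X} \succ \mat{W}_{ctr}$ via the Lyapunov operator and in passing to $\mat{X}^{-1}$ to match the stated definition of quadratic stability, both of which the paper leaves implicit.
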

\begin{proof}
Using Proposition \ref{lpv:prop:polytopic-func} and assuming an affine polytopic LPV system, it can be seen that satisfying the LMI system for all $\gvec{\xi} \in \set{E}^j_0$ is equivalent to satisfying the same system for all $\gvec{\xi} \in \set{E}^j$.
The proof that the LMI system \eqref{lpv:eq:ctr-LMIsyn-full} implies \eqref{lpv:eq:ctr-LMIsyn-4} can be obtained from the definition of $\mathcal{H}_2$-norm of $\mat{T}_{zw}(\gvec{\xi})$.
Let the transfer function matrix of the closed-loop operator $\mat{T}_{zw}(\gvec{\xi})$ be denoted as $\mat{T}_{i_{zw}}(s)$ for some parameter vector $\gvec{\xi}_i \in \set{E}^j_0$, and defined as
\begin{align*}
    &\mat{T}_{i_{zw}}(s) = \\
    &= \mat{C}_c(\gvec{\xi}_i)\,
        \left[s\,\eyefunc - \mat{A}_c(\gvec{\xi}_i)\right]^{-1}\,
        \mat{B}_c(\gvec{\xi}_i) + \mat{D}_c(\gvec{\xi}_i) \\
    &= \left[\mat{C}(\gvec{\xi}_i) + \mat{E}(\gvec{\xi}_i)\,\mat{K}\right]\,
            \left[s\,\eyefunc - \mat{A}(\gvec{\xi}_i)-\mat{B}(\gvec{\xi}_i)\,\mat{K}\right]^{-1}\,
            \mat{B}_w(\gvec{\xi}_i) \\
    &   \quad + \mat{D}(\gvec{\xi}_i)
    \;.
\end{align*}
Then, the $\mathcal{H}_2$-norm of $\mat{T}_{i_{zw}}(s)$
is defined as
\begin{align*}
    \normH{2}{\mat{T}_{i_{zw}}}^2 
        &= \frac{1}{2\,\pi} \tr{\int_{-\infty}^{+\infty} 
                \mat{T}_{i_{zw}}^H(j\omega)\,\mat{T}_{i_{zw}}(j\omega)\,d\omega}
\intertext{which using Parseval's theorem can be rewritten as}
    \normH{2}{\mat{T}_{i_{zw}}}^2 
        &= \tr{\int_{0}^{+\infty} 
            \mat{H}_i^T(t)\,\mat{H}_i(t)\,dt}
\end{align*}
where $\mat{H}_i(t)$ is the impulse response matrix of 
$\mat{T}_{i_{zw}}(s)$. 
Noting that the impulse response can be defined as
\[
    \mat{H}_i(t) = \mat{C}_c(\gvec{\xi}_i)\,e^{\mat{A}_c(\gvec{\xi}_i)\,t}\,\mat{B}_c(\gvec{\xi}_i)\;,
\]
it is possible to see that, after some algebraic manipulation,
\begin{align*}
    \normH{2}{\mat{T}_{i_{zw}}}^2 
        &= \tr{ \mat{C}_c(\gvec{\xi}_i)\,\mat{W}_{ctr}(\gvec{\xi}_i)\,\mat{C}_c^T(\gvec{\xi}_i) }\;,
\end{align*}
where
\begin{align*}
    \mat{W}_{ctr}(\gvec{\xi}_i) 
        &= \int_{0}^{+\infty} 
            e^{\mat{A}_c(\gvec{\xi}_i)\,t}\,\mat{B}_c(\gvec{\xi}_i)\,
                \mat{B}_c^T(\gvec{\xi}_i)\,e^{\mat{A}_c^T(\gvec{\xi}_i)\,t}\,dt
\end{align*}
stands for the controllability Grammian, given by the symmetric positive 
definite solution of the Lyapunov equation
\begin{equation*}
    \mat{A}_c(\gvec{\xi}_i)\,\mat{W}_{ctr}(\gvec{\xi}_i) + \mat{W}_{ctr}(\gvec{\xi}_i)\,\mat{A}_c^T(\gvec{\xi}_i) + \mat{B}_c(\gvec{\xi}_i)\,\mat{B}_c^T(\gvec{\xi}_i) = \zero\;.
\end{equation*}
It can be seen that, for each parameter vector $\gvec{\xi}_i \in \set{E}^j_0$, 
$\gamma$ is an upper bound for the $\mathcal{H}_2$-norm 
of the closed-loop operator $\mat{T}_{i_{zw}}$ if and only if there exists 
$\mat{X} \succ 0$, $0 \prec \mat{W}_{ctr}(\gvec{\xi}_i) \prec \mat{X}$ such that
\begin{equation}\label{lpv:eq:ctr-LMIproof-1}
    \mat{A}_c(\gvec{\xi}_i)\,\mat{X} + \mat{X}\,\mat{A}_c(\gvec{\xi}_i)^T + \mat{B}_c(\gvec{\xi}_i)\,\mat{B}_c^T(\gvec{\xi}_i) \prec \zero
\end{equation}
and
\begin{equation}\label{lpv:eq:ctr-LMIproof-2}
    \tr{\mat{C}_c(\gvec{\xi}_i)\,\mat{X}\,\mat{C}_c^T(\gvec{\xi}_i)} < \gamma^2.
\end{equation}
Equation \eqref{lpv:eq:ctr-LMIproof-1} can be rewritten as
\begin{equation*}
    \blkmat{
        \mat{A}_c(\gvec{\xi}_i)\,\mat{X} 
            + \mat{X}\,\mat{A}_c^T(\gvec{\xi}_i) 
            + \mat{B}_c(\gvec{\xi}_i)\,\mat{B}_c^T(\gvec{\xi}_i)  
          & \bvec{0}    \\
        \bvec{0}                          
          & -\eyefunc        \\
    } \prec \zero \;,
\end{equation*}
that, using Schur complements, becomes
\begin{equation*}
    \blkmat{
        \mat{A}_c(\gvec{\xi}_i)\,\mat{X} 
            + \mat{X}\,\mat{A}_c^T(\gvec{\xi}_i)   
          & \mat{X}\,\mat{B}_c(\gvec{\xi}_i)   \\
        \mat{B}_c^T(\gvec{\xi}_i)\,\mat{X}               
          & -\eyefunc    \\
    } \prec \zero \;,
\end{equation*}
or equivalently, introducing the matrix $\mat{W} = \mat{K}\,\mat{X}$ yields
\begin{equation}\label{lpv:eq:ctr-LMIproof-1-3}
    {\scriptstyle \blkmat{
        \mat{A}(\gvec{\xi}_i) \mat{X} 
            \!\!+\!\! \mat{X} \mat{A}^T\!(\gvec{\xi}_i) 
            \!+\! \mat{B}(\gvec{\xi}_i) \mat{W} 
            \!\!+\!\! \mat{W}^T \mat{B}^T\!(\gvec{\xi}_i) 
          & \!\!\!\mat{B}(\gvec{\xi}_i)     \\
        \mat{B}^T(\gvec{\xi}_i)                             
          & \!\!\!-\eyefunc    \\
    } \!\!\prec \zero. }
\end{equation}
Under the conditions of the theorem, it can be seen that 
\eqref{lpv:eq:ctr-LMIproof-1-3} is satisfied for all 
$\gvec{\xi}_i \in \set{E}^j_0$ and that \eqref{lpv:eq:ctr-LMIsyn-2} can be 
written as
\begin{equation*}
    \blkmat{
        \mat{Y}       & \mat{C}_c(\gvec{\xi}_i)       \\
        \mat{C}_c^T(\gvec{\xi}_i)   & \mat{X}^{-1}    \\
    } \succ \zero \;,
\end{equation*}
which, applying once again Schur complements, implies that 
\begin{align*}
    \blkmat{
        \mat{Y} - \mat{C}_c(\gvec{\xi}_i)\,\mat{X}\,\mat{C}_c^T(\gvec{\xi}_i)   & \bvec{0} \\
        \bvec{0}              & \mat{X}^{-1} \\
    } &\succ \zero
\intertext{and consequently}
    \tr{\mat{C}_c(\gvec{\xi}_i)\,\mat{X}\,\mat{C}_c(\gvec{\xi}_i)^T} < \tr{\mat{Y}} &< \gamma^2\;,
\end{align*}
also, for all $\gvec{\xi}_i \in \set{E}^j_0$.
Thus, \eqref{lpv:eq:ctr-LMIproof-2} is implied and the bound on the 
$\set{H}_2$-norm is established.
As \eqref{lpv:eq:ctr-LMIproof-1} is implied by the conditions of the theorem,
it can be noted that
\[
    \mat{A}_c(\gvec{\xi}_i)\,\mat{X} + \mat{X}\,\mat{A}_c(\gvec{\xi}_i)^T 
    \prec -\mat{B}_c(\gvec{\xi}_i)\,\mat{B}_c^T(\gvec{\xi}_i) 
    \preceq \zero \;,
\]
for all $\gvec{\xi}_i \in \set{E}^j_0$, implying the quadratic stability of the 
closed-loop and, thus, concluding the proof.
    
\end{proof}\noindent
With this result, the optimal solution for the continuous-time $\mathcal{H}_2$ control problem is approximated through the minimization of $\gamma$ subject to the LMIs of Theorem \ref{lpv:thm:LMIsynthesis}.

\bibliographystyle{IEEEtran}
\bibliography{IEEEabrv,Biblios_Lidar,Biblios_Self,Biblios_Intro,Biblios_DSOR,Biblios_Models}

\begin{thebibliography}{1}
\providecommand{\url}[1]{#1}
\csname url@samestyle\endcsname
\providecommand{\newblock}{\relax}
\providecommand{\bibinfo}[2]{#2}
\providecommand{\BIBentrySTDinterwordspacing}{\spaceskip=0pt\relax}
\providecommand{\BIBentryALTinterwordstretchfactor}{4}
\providecommand{\BIBentryALTinterwordspacing}{\spaceskip=\fontdimen2\font plus
\BIBentryALTinterwordstretchfactor\fontdimen3\font minus
  \fontdimen4\font\relax}
\providecommand{\BIBforeignlanguage}[2]{{%
\expandafter\ifx\csname l@#1\endcsname\relax
\typeout{** WARNING: IEEEtran.bst: No hyphenation pattern has been}%
\typeout{** loaded for the language `#1'. Using the pattern for}%
\typeout{** the default language instead.}%
\else
\language=\csname l@#1\endcsname
\fi
#2}}
\providecommand{\BIBdecl}{\relax}
\BIBdecl

\bibitem{guerreiro:2017:tcst:lidarctrl}
B.~J. Guerreiro, C.~Silvestre, R.~Cunha, and D.~Cabecinhas, ``Lidar-based
  control of autonomous rotorcraft for the inspection of pier-like
  structures,'' \emph{IEEE Transactions in Control Systems Technology}, 2017.

\bibitem{boyd:1994}
S.~Boyd, L.~Ghaoui, E.~Feron, and V.~Balakrishnan, \emph{Linear Matrix
  Inequality in System and Control Theory}, ser. SIAM studies in applied
  mathematics.\hskip 1em plus 0.5em minus 0.4em\relax Philadelphia, PA: Society
  for Industrial and Applied Mathematics, SIAM, 1994, vol.~15.

\bibitem{scherer:2000}
C.~Scherer and S.~Weiland, ``Lecture notes on linear matrix inequality methods
  in control,'' Dutch Institute of Systems and Control, 2000.

\bibitem{ghaoui:1999}
L.~Ghaoui and S.-I. Niculescu, \emph{Advances in Linear Matrix Inequality
  Methods in Control}.\hskip 1em plus 0.5em minus 0.4em\relax Philadelphia, PA:
  Society for Industrial and Applied Mathematics, SIAM, 1999.

\end{thebibliography}

\end{document}